\def\BState{\State\hskip-\ALG@thistlm}
\newcommand\extralabel[4][0mm]{\node[label={[label distance=#1]#2:#3}] at (#4){};}
\newcolumntype{P}[1]{>{\centering\arraybackslash}p{#1}}
\newcolumntype{M}[1]{>{\centering\arraybackslash}m{#1}}
\def\BState{\State\hskip-\ALG@thistlm}
\definecolor{myblue}{RGB}{80,80,160}
\definecolor{mygreen}{RGB}{80,160,80}
\definecolor{myempty}{RGB}{255,255,255}
\begin{document}

\title{Distributed $K$-Backup Placement \\
and Applications to Virtual Memory \\ in Real-World Wireless Networks}

\titlerunning{Distributed $K$-BP
and Applications to VM in Real-World Wireless Networks}

\author{Gal Oren\inst{1,2} 
\and Leonid Barenboim\inst{3}}

\authorrunning{G. Oren, L. Barenboim}

\institute{
Department of Computer Science, Ben-Gurion University of the Negev, P.O.B. 653, Be'er Sheva, Israel 
\and Department of Physics, Nuclear Research Center-Negev, P.O.B. 9001, Be'er-Sheva, Israel
\and Department of Mathematics and Computer Science, The Open University of Israel, P.O.B. 808, Ra'anana, Israel\\
\email{orenw@post.bgu.ac.il, leonidb@openu.ac.il}}
\maketitle              

\begin{abstract}
The \textit{Backup Placement problem} in networks in the $\mathcal{CONGEST}$ distributed setting considers a network graph $G = (V,E)$, in which the goal of each vertex $v \in V$ is selecting a neighbor, such that the maximum number of vertices in $V$ that select the same vertex is minimized \cite{halldorsson2015bp}. Previous backup placement algorithms suffer from obliviousness to main factors of real-world heterogeneous wireless network. Specifically, there is no consideration of the nodes memory and storage capacities, and no reference to a case in which nodes have different energy capacity, and thus can leave (or join) the network at any time. These parameters are strongly correlated in wireless networks, as the load on different parts of the network can differ greatly, thus requiring more communication, energy, memory and storage. 
In order to fit the real-world attributes of wireless networks, this work addresses a generalized version of the original problem, namely \textit{$K$-Backup Placement}, in which each vertex selects $K$ neighbors, for a positive parameter $K$. Our \textit{$K$-Backup Placement} algorithm terminates within just one round.
In addition we suggest two complementary algorithms which employ \textit{$K$-Backup-Placement} to obtain efficient virtual memory schemes for wireless networks. The first algorithm divides the memory of each node to many small parts. Each vertex is assigned the memories of a large subset of its neighbors. Thus more memory capacity for more vertices is gained, but with much fragmentation. The second algorithm requires greater round-complexity, but produces larger virtual memory for each vertex without any fragmentation.

\keywords{Wireless Sensor Networks \and Internet of Things \and  Distributed Backup Placement. Real-World Wireless Networks.}
\end{abstract}

\newpage
\section{Introduction}

\subsection{Fault-Tolerance and Data Loss in IoT and WSNs}

IoT (Internet-of-Things) devices involve very differentiated requirements in terms of communication fashion, computation speed, memory limit, transmission rate and data storage capacity \cite{aggarwal2013internet}. Considering these devices as nodes in a wireless network (forming a Wireless Sensor Network, or WSN) they experience constant failures due to environmental factors, battery exhaustion, damaged communications links, data collision, wear-out of memory and storage units and overloaded sensors \cite{kakamanshadi2015survey}. As massive volume of heterogeneous data is created by those diverse devices periodically, it needs to be sent to other locations. Moreover, As those devices may operate in a nonstop fashion and usually placed in geographically diverse locations, they introduce a challenge in terms of communication overhead and storage capacity, since transfer and storage of data are resource-consuming activities within an IoT data management \cite{abu2013data}. Hence, these factors are crucial in data management techniques for IoT and WSNs.

The challenging situation can be described as follows. As data is aggregated in a device or in the concentration storage points within the IoT network, the amount of vacant storage becomes very limited. Thus the data usually needs to be sent further over the network, either to a storage relay node, or storage end-node \cite{fan2010scheme}. As wireless broadband communications is most likely to be used all along this process in the IoT world, there are also strict limitations to the amount of communication each power-limited device can perform. Moreover, since crashes of nodes, failures of communication links, and missing data are unavoidable in wireless networks, fault-tolerance becomes a key-issue. Among the causes of these constant failures are environmental factors, damaged communications links, data collision and overloaded nodes \cite{kakamanshadi2015survey}. Thus, the problems of efficient data transfer, storage and backup, become major topics of interest in IoT \cite{qin2016things}.

\subsection{The Backup Placement Problem in Networks}

The {\em Distributed Backup Placement} problem was introduced by  Halldorsson, Kohler, Patt-Shamir, and Rawitz \cite{halldorsson2015bp} in 2015. It is very well motivated by networks whose nodes may have memory faults, and wish to store backup copies of their data at neighboring nodes \cite{oren2018distributed}. Unfortunately, neighboring nodes may incur faults as well, and so the number of nodes that select the same node as their backup should be minimized. This way, if a backup node incurs fault, the number of nodes in the network that lose data is minimized. Moreover, a proper backup placement allows each vertex to perform a backup to a neighboring node, rather than a more distant destination, and thus improves network performance. In addition, nodes' memories are used to the minimum extent for the purpose of backups, which makes it possible to maximize the memory available for other purposes of the vertices. 

Throughout this paper, we consider the $\mathcal{CONGEST}$ distributed model, In this model a network is represented by a graph $G = (V,E)$, whose vertices ($V$, $|V|=n$) represent processors and edges ($E \subseteq V \times V$, $|E|=m$) represent communication links between pairs of processors. Time proceeds in discrete synchronous rounds, where in each round vertices receive and send  message to their neighbors (i.e. those processors with which they share a communication link), and perform local computations. The bandwidth on each communication line on each round is bounded by $O(\log{n})$. 
The network graph is considered as the input, where initially each vertex knows only its own ID and the IDs of its neighbors. The goal of each vertex is computing its part for a certain problem. The neighbors set of each node $v \in V$ is denoted by $\Gamma(v)$. The degree of a node $v \in V$ is $deg(v) = |\Gamma(v)|$. The maximum (respectively, minimum) degree of graph $G$ is $\Delta(G) = \max{(deg(v))}$ (res., $\delta(G) = \min{(deg(v))}$) for any $v \in V$. The distance $dist(u,v)$ between two vertices $u,v \in V$ equals to the length of the shortest path connecting $u$ and $v$ in $G$.

The neighborhood independence of graph $G=(V,E)$ is the maximum number of independent neighbors a vertex in the graph has. Formally, Neighborhood independence $I(G)$ of a graph $G=(V,E)$ is the maximum size of an independent set contained in a neighborhood $\Gamma(v), v \in V$. The notion of neighborhood independence was introduced in \cite{barenboim2011deterministic} and has been intensively studied since then \cite{barenboim2017deterministic, barenboim2018distributed, kuhn2018deterministic, assadi2019algorithms}.

A function $\varphi :V \rightarrow [\alpha]$ is a legal \textit{$\alpha$-Coloring} of graph $G=(V,E)$ if and only if, for each $\{v,u\}\in E \rightarrow \varphi(v) \neq \varphi(u)$. A subset of vertices assigned to the same color is called a \textit{color class}. Note that every such class forms an \textit{independent set}. Thus, a $k$-coloring is the same as a partition of the vertex set into $k$ independent sets. We define \textit{super-class} to include a range of color classes. A function $\varphi' :V \rightarrow [\alpha]$ is a legal \textit{$t$-hop $\alpha$-Coloring} of graph $G=(V,E)$ if and only if, for each $u,v \in V$ such that $dist(u,v) \leq t$, $\varphi'(v) \neq \varphi'(u)$.

In this paper we generalize the classic backup placement problem. The generalized version is called {\em $K$-Backup Placement}. It is defined as follows. 

\begin{definition}[$K$-Backup Placement problem]
Given an unweighted unoriented graph $G = (V,E)$ representing the network, and a constant $K \geq \delta(G)$, each vertex $v_i \in V$ ($1 \leq i \leq n$) selects $K$ neighbors $B_i=\{v_i^1, v_i^2, ..., v_i^K\}$. For any vertex $u_j \in V$ ($1 \leq j \leq n$), let $b_j$ be the number of vertices performed backup on node $u_j$. The goal in the $K$-Backup Placement problem is to minimize the $b_j$ (for any $1 \leq j \leq n$).
\end{definition}

Note that the classic Backup Placement problem is a special case of $K$-Backup Placement where $K=1$. 
Several solutions to the 1-Backup Placement problem were introduced over the last decade. Halldorsson et al. \cite{halldorsson2015bp}, who presented the problem, obtained an $O(\log n/ \log \log n)$ approximation with randomized polylogarithmic time. Their algorithm remained the state-of-the-art for general graphs, as well as several specific graph families. Barenboim and Oren \cite{barenboim2020fast} obtained significantly improved algorithms for various graph topologies. Specifically, they showed that an $O(1)$-approximation to optimal 1-Backup Placement can be computed {\em deterministically} in $O(1)$ rounds in wireless networks, and more generally, in any graph with {\em neighborhood independence bounded by a constant}. At the other end, they considered sparse graphs, such as trees, forests, planar graphs and graphs of constant arboricity\footnote[1]{ {\em Arboricity} is the minimum number of forests that the graph edges can be partitioned into or the maximum ratio of edges to nodes in any subgraph. The arboricity of a graph is a measure for its sparsity. Sparse graphs have low arboricity.}, and obtained constant approximation to optimal 1-Backup Placement in $O(\log n)$ deterministic rounds. They also considered two variant of networks, specifically faultless networks and faulty networks. For the latter, they obtained a self-stabilizing algorithm and proved stabilization within 1 to 3 rounds \cite{barenboim2020distributed}. 

These algorithms are extremely efficient for wireless networks models as commonly the WSN is modeled by $n$ nodes, each of which has the same communication range (of 1 unit) and memory size (1 unit). Within a single round all nodes can communicate with other nodes in their communication range. Each node can send a message of restricted size to each of the nodes in its communication range (Henceforth, neighboring nodes.) The nodes goal is to sense and process some data, that change periodically. The correlation between the $K$-Backup Placement problem and WSNs arise from the fact that each node may be required to process more data than its own memory and storage capacity. Consequently, excessive data has to be sent to neighboring nodes, without overloading them. (Recall that the degree of each vertex is at most $\Delta$.)

However, in practice, real-world attributes of the network and its nodes should be taken into account. Therefore, those algorithms should be extended to include other important aspects \cite{oren2018distributed}, such as:

\begin{itemize}
  \item \textbf{Memory and Storage}: As the amount of RAM and Storage of the node is final and non-uniform \cite{kakamanshadi2015survey} - either because the nodes themselves differ in those parameters or because the nodes joined the network in non-synchronized periods - backup placement cannot be oblivious to it. Moreover, because the RAM acts as a buffer towards the flash memory \cite{karray2018comprehensive}, the amount of pages is limited by the RAM capacity, which is usually about an order of magnitude smaller than the flash capacity. Therefore, the size of the packets cannot exceed the RAM capacity during the backup operation \cite{wagner2007algorithms}.
  \item \textbf{Communication and Energy}: The amount of available energy of a node - as well as the amount of communication rounds the node can perform - is final and non-uniform \cite{karray2018comprehensive}. This is either because the nodes themselves can hold different battery capabilities and communication range; or because the nodes joined the network in non-synchronized periods; or performed in different roles which require different amount of energy \cite{tuna2013energy} - backup placement cannot be oblivious to it.
  \item \textbf{Nodes that Join or Leave}: The backup placement algorithm cannot be oblivious to premature death of nodes and joining of new ones to the network \cite{rostami2018survey}. This is because we cannot assume that the network is able to make appropriate cluster adaptation (i.e. triggered to re-cluster in these changes of the topology) nor in the capability of the backup placement algorithm to address those changes without loosing backups or missing the new opportunity to backup data to new joined nodes \cite{kong2013data}.
\end{itemize}

In order to solve the problem of backup placement in real-world wireless networks as described above, we devise a $K$-Backup Placement algorithm. Indeed, if a vertex selects $K$ neighbors, even if some of them crash, there is still a possibility to employ the neighbors that remain. We stress that the desired procedure should compute $O(1)$-approximate backup placement in graphs with \textit{constant neighborhood independence $c=I(G)$}. This family of graphs includes many wireless sensor network topologies such as, UDG, QUDG, UBG, GBG, etc\footnote[2]{A graph is \textit{growth-bounded} (GBG) or \textit{independence-bounded} (BIG) if the number of independent nodes in a node’s $r$-neighborhood is bounded. Intuitively, if many nodes are located close from each other, many of them must be within mutual transmission range. As the model only restricts the number of independent nodes in each neighborhood it is therefore a generalization of the \textit{unit disk} graph (UDG), the \textit{quasi unit disk} graph (QUDG) and the \textit{unit ball} graph (UBG) models, which are essential model of WSNs \cite{schmid2008modeling}. We stress that for a variety of real-life network topologies, the neighborhood independence $c$ is bounded by a small constant. For example, let $G$ be a UDG, the neighborhood independence of a vertex $u$ is at most $c=5$.}.
The procedure receives a graph $G = (V,E)$ as input, and all the selections should be performed in parallel within a single round. Indeed, in this setting we obtain an $O(1)$-approximate $K$-Backup Placement within just a single round.

Once we have a solution for the $K$-Backup Placement problem, we turn to a more challenging task of allocating virtual memory for nodes in wireless networks. Here, the goal is partitioning the node set into classes, and obtain a scheduling in which each class is active at a time and can make use of memories of other classes that are inactive in that time. The challenge here is coming up with the right trade-off between the number of classes (that needs to be small) and the size of the virtual memory (that becomes larger as the number of classes increase). To obtain the desired balance we compute a certain coloring in which all vertices of the same color class have considerable amount of virtual memory, while keeping the number of classes sufficiently small.

All of our algorithms are {\em deterministic}. The round complexity of our algorithms is $O({\log}^*{n} + K ^ 4)$ for graphs with constant neighborhood independence $c$ (i.e., dense graphs) and $O({\log}^*{n} + R)$ for graphs with bounded growth, where $R$ is a positive parameter (rate of data change, $R < \Delta$). The first algorithm achieves a memory increase by a multiplicative factor $O(K)$ per active vertex, and the second algorithm achieves a memory increase by a multiplicative factor $O(R)$. We stress that the $K$-Backup Placement algorithms in this paper optimizes memory and storage backups in the entire network, and as such inherently reduce communication, which consequently reduce energy consumption, which eventually decreases the amount of nodes that leave the network. Moreover, the distributed fashion of the algorithms, which relies on the neighborhood of each vertex solely, allow us to adjust to network changes. Thus, our solutions suit a large family of real-world networks.

\begin{table}[H]
    \begin{center}
    \begin{adjustbox}{width=1\textwidth}
    \small
        \begin{tabular}{p{4cm}|p{3cm}|p{3cm}|p{3cm}|p{1.5cm}|}
            Problem / Topology & Very dense: Graphs of bounded growth & Dense: Graphs of constant neighborhood independence & Sparse: Graphs of constant arboricity & General Graphs \\
            \hline
            $O(\frac{\log n}{\log \log n})$-approximate \newline 1-Backup Placement  & \multicolumn{4}{c|}{$O(\frac{\log^ 6 n}{\log^4 \log n}$) \ \ \ (randomized)  \ \ \cite{halldorsson2015bp} } \\ 
            \hline
            $O(1)$-approximate \newline 1-Backup Placement & \multicolumn{2}{c|}{$O(1)$ \cite{barenboim2020fast} \cite{barenboim2020distributed}}  & $O(\log n)$, $\Omega(\sqrt {\frac{\log n}{\log \log n}})$  \cite{barenboim2020fast} \cite{barenboim2020distributed} & \\
            \hline
            $O(K)$-approximate \newline $K$-Backup Placement & \multicolumn{2}{c|}{$O(1)$ 
            \textbf{This paper}} & & \\
            \hline
        \end{tabular}
        \end{adjustbox}
        \caption{Running times comparison of state-of-the-art backup placement algorithms.}
        \label{my-label}
    \end{center}
\end{table}

\vspace{-1cm}

\section{\textsc{$K$-Backup Placement} by \textsc{$K$-Next-Modulo} Algorithm}
Before we introduce the algorithm for $K$-Backup Placement, we would like to introduce the \textsc{$K$-Next-Modulo} algorithm which will be later used as a building-block. This is a generalization of \cite{barenboim2020distributed}. In \cite{barenboim2020distributed} Barenboim and Oren defined an operation named \textsc{Next-Modulo} that receives a vertex $v$ and a set of its neighbors $\Gamma(v)$ in the graph $G$. The operation \textsc{Next-Modulo}($v$, $\Gamma(v)$), selects a vertex in $\Gamma(v)$ with a higher ID than the ID of $v$, and whose ID is the closets to that of $v$. If no such neighbor is found, then the operation returns the neighbor with the smallest ID. All these selections are performed in parallel within a single round. This completes the description of the algorithm. In the current paper we generalize the \textsc{Next-Modulo} algorithm. The motivation behind this new algorithm is that a usage of a generalized version of the \textsc{Next-Modulo} function \cite{barenboim2020distributed} in the selection process of the backup placement can effectively bound the load on all of the vertices while keeping good performances.

Given a graph $G=(V,E)$ with bounded \textit{neighborhood independence} \textit{c}, we define an operation \textsc{$K$-Next-Modulo} that receives a vertex $v$, a set of its neighbors $\Gamma(v)$ in the graph $G$, and the parameter $K$. The operation \textsc{$K$-Next-Modulo}($v$, $\Gamma(v)$, $K$), selects the $K$ neighbors that immediately succeed $v$, in a circular ordering according to vertex IDs. 
Formally, denote $d = \deg(v) + 1$, and let $u_1,u_2,...,u_i = v, u_{i + 1},..., u_{\deg(v)+1}$ be an ordering of the neighborhood of $v$, according to IDs, in ascending order. Then $v$ selects the $K$ neighbors:
$u_{(i+1) \: \mbox{mod} \: d}, u_{(i+2) \: \mbox{mod} \: d},..., u_{(i+K) \: \mbox{mod} \: d}$.
All these selections are performed in parallel within a single round. This completes the description of the algorithm. Its pseudocode is provided in Algorithm \ref{nextmoduleka}. Theorem \ref{theorem:k-next-modulo-correctness} summarizes its correctness.

\begin{algorithm}[H]
\caption{\textsc{$K$-Next-Modulo} Algorithm}
\label{nextmoduleka}
\begin{algorithmic}[1]
\Procedure{$K$-Next-Modulo($v$, $\Gamma(v)$, $K$)}{}

\State \textbf{sort} $\Gamma(v)$ by IDs as a \textit{Circular Linked List}
\State $v_{index} = \Gamma(v)[0]$

\State \textbf{foreach} $v_{neighbor} \in \Gamma(v)$
\State \hspace{0.5cm} \textbf{if} $v_{neighbor}.ID > v.ID$ 
\State \hspace{1.0cm} $v_{index} = v_{neighbor}$
\State \hspace{1.0cm} \textbf{break} \Comment{Selects a vertex in $\Gamma(v)$ with a higher ID than the ID of $v$, and whose ID is the closets to that of $v$. If no such neighbor is found, then the operation returns the neighbor with the smallest ID.}

\State \textbf{while} $K>0$ \textbf{do}
\State \hspace{0.5cm} \textbf{if} $v_{index} \notin v.BP$ \textbf{then}
\State \hspace{1.0cm} $v.BP.\textbf{append}(v_{index})$ \Comment{Find the $K$ vertices next to $v$ in $\Gamma(v)$, and choose it.}
\State \hspace{0.5cm} $K = K-1$
\State \hspace{0.5cm} $v_{index} = v_{index}.\textbf{next}$
\State \textbf{return} $v$

\EndProcedure
\end{algorithmic}
\end{algorithm}

\begin{theorem}
\label{theorem:k-next-modulo-correctness}
In a graph $G = (V, E)$ with neighborhood-independence bounded by a constant $c$, and with minimum degree at least $K$, any vertex $v \in V$ is selected by at most $c \cdot K$ vertices.
\end{theorem}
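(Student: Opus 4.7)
Plan: I would approach this bound by putting together two ingredients about the induced subgraph $G[S]$, where $S$ is the set of vertices that select $v$. The first ingredient is an upper bound on the independence number $\alpha(G[S])$, and the second is an upper bound on the chromatic number $\chi(G[S])$; the product bound $|S| \leq \chi(G[S]) \cdot \alpha(G[S])$ (a $K$-coloring splits $S$ into $K$ independent sets, each of size at most $c$) then yields $|S| \leq cK$.

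For the independence part, I would simply note that every selector of $v$ is a neighbor of $v$, so $S \subseteq \Gamma(v)$ and $G[S]$ is a subgraph of $G[\Gamma(v)]$; the neighborhood-independence hypothesis then gives $\alpha(G[S]) \leq c$ immediately.

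The main step will be the $K$-coloring of $G[S]$. Here the key observation I would exploit is that Algorithm~\ref{nextmoduleka} depends only on the cyclic order of vertex IDs, not on their actual values, so I can first cyclically shift all IDs to assume without loss of generality that $v$ holds the smallest ID in $V$. Under this normalization, for any $u \in \Gamma(v)$, the cyclic successor list of $u$ within $\Gamma(u) \cup \{u\}$ first sweeps through all neighbors of $u$ with ID greater than $u$ and, upon wrapping around, hits $v$ immediately (since $v$ is now the smallest-ID neighbor of $u$). The position of $v$ in that list is therefore $|\{w \in \Gamma(u) : w > u\}| + 1$, and so $u \in S$ if and only if $u$ has at most $K-1$ neighbors of strictly larger ID. Ordering $S$ from highest ID to lowest then exhibits $G[S]$ as $(K-1)$-degenerate, and a greedy coloring along this order gives the desired bound $\chi(G[S]) \leq K$.

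The obstacle I anticipate most strongly is that the natural partition of $S$ by the rank of $v$ in each selector's cyclic list does not yield independent sets --- two adjacent vertices may both see $v$ at the same rank, so a simple ``color by rank'' strategy fails. The cyclic-shift trick is precisely what sidesteps this: it converts the per-vertex position condition into a single, uniform ``few larger neighbors'' condition that is compatible with a global ID-based greedy coloring.
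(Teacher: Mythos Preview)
Your argument is correct. Both your proof and the paper's rest on the same combinatorial core --- after normalizing so that $v$ is cyclically minimal, every selector $u$ of $v$ has at most $K-1$ neighbours of larger ID, hence at most $K-1$ ``forward'' neighbours inside the selector set $S$ --- but the two proofs exploit this fact differently. The paper argues by contradiction and \emph{peeling}: it repeatedly takes the immediate cyclic successor $w_i$ of $v$ in the current remainder of $S$, shows $w_i$ has at most $K-1$ neighbours there, and removes $w_i$ together with those neighbours; after $c+1$ iterations (possible because $|S| > cK$), the extracted $w_1,\dots,w_{c+1}$ are pairwise non-adjacent, contradicting neighbourhood independence $c$. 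You instead package the same forward-degree bound as $(K-1)$-degeneracy of $G[S]$, pass through $\chi(G[S]) \leq K$ via greedy colouring, and finish with the product bound $|S| \leq \chi(G[S]) \cdot \alpha(G[S]) \leq cK$. Your route is slightly more structural and makes the role of the two parameters $c$ and $K$ transparent (one bounds $\alpha$, the other bounds $\chi$); the paper's peeling is more self-contained and constructs the large independent set explicitly. The cyclic-shift normalization you introduce is exactly the implicit manoeuvre behind the paper's choice of ``the vertex that immediately succeeds $v$'': both reduce to the situation where every other selector lies strictly after the chosen vertex and strictly before $v$ in the cyclic ID order.
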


\begin{proof}
Assume for contradiction that more than $c \cdot K$ neighbors chose $v$. Denote by $U$ the subset of neighbors that selected $v$. Let $w_1$ be the vertex that immediately succeeds $v$ in a circular ordering according to IDs of $U \cup \{v\}$. It follows that $w_1$ is connected by edges to at most $K - 1$ vertices of $U \setminus \{w_1\}$. Otherwise, there are $k$ neighbors of $w_1$ that succeed $w_1$ and precede $v$ in the circular ordering, and thus $v$ would not have been selected by $w_1$. We remove $w_1$ and its neighbors from $U$. Next, we find a remaining vertex $w_2$ in $U$ that immediately succeeds $v$ in a circular ordering according to IDs of remaining vertices in $U \cup \{ v \}$. Similarly, this vertex can be connected to at most $K - 1$ vertices in $U$. We remove $w_2$ and its neighbors from $U$. We repeat this for $i =1,2,...,c+1$ iterations. (See illustration in Figure \ref{fig_proof}.) The number of neighbors in each iteration becomes smaller by at most $k$ neighbors. The initial number of neighbors is at least $c \cdot K + 1$. Therefore, it is possible to execute $c + 1$ iterations. In each iteration $i = 1,2,...,c + 1$ we remove a vertex $w_i$ and its neighbors from $U$. Hence the set $w_1,w_2,...,w_{c + 1}$ is an independent set of neighbors of $v$. This is a contradiction to the neighborhood independence that is bounded by c. 

\newpage
\vspace{-0.6cm}
\begin{figure}[H]
\centering
\begin{tikzpicture}

  \tikzstyle{vertex}=[circle,fill=black!25,minimum size=12pt,inner sep=2pt]
  
  \tikzstyle{vertex_v}=[circle,fill=red!20,minimum size=12pt,inner sep=2pt]
  
  \node[vertex] (G_1) at (1,0) {1};
  \node[vertex] (G_2) at (-1,0) {2};
  \node[vertex] (G_3) at (1,-2) {5};
  \node[vertex_v] (G_4) at (-1,-2) {4}; \extralabel{-135}{$v$}{G_4};
  \node[vertex] (G_5) at (-2.5,-1) {3};
  \node[vertex] (G_6) at (2.5,-1) {6};
  
  \draw [->,line width=0.2mm, black!30] (G_1) -- (G_2); 
  \draw [->,line width=0.2mm, black!30] (G_2) -- (G_5);
  \draw [->,line width=0.2mm, black!30] (G_5) -- (G_4);
  \draw [->,line width=0.2mm, black!30] (G_4) -- (G_3);
  \draw [->,line width=0.2mm, black!30] (G_3) -- (G_6);
  \draw [->,line width=0.2mm, black!30] (G_6) -- (G_1);
  
\draw [->, line width=0.2mm, blue] [dashed] (G_3) to[out=135,in=35] (G_4);
\draw [->, line width=0.2mm, blue] [dashed] (G_5) to[out=0,in=125] (G_4);
\draw [->, line width=0.2mm, blue] [dashed] (G_2) -- (G_4);
\draw [->, line width=0.2mm, blue] [dashed] (G_1) to[out=225,in=55] (G_4);
\draw [->, line width=0.2mm, blue] [dashed] (G_6) -- (G_4);

\end{tikzpicture}
\end{figure}

\vspace{-1.5cm}
\begin{figure}[H]
\centering
\begin{tikzpicture}[every node/.append style={draw=gray, left color=white, single arrow}]
\node at (0,-1) [
    right color=black!50,
    single arrow,
    minimum height=1cm,
    shading angle=0,
    rotate=270
] {};
\end{tikzpicture}
\end{figure}
\vspace{-1.5cm}
\begin{figure}[H]
\centering
\begin{tikzpicture}

  \tikzstyle{vertex}=[circle,fill=black!25,minimum size=12pt,inner sep=2pt]
  
  \tikzstyle{vertex_v}=[circle,fill=red!20,minimum size=12pt,inner sep=2pt]
  
  \tikzstyle{vertex_selected}=[circle,fill=blue!15,minimum size=12pt,inner sep=2pt]
  
  \node[vertex_v] (G_1) at (1,0) {1};
  \node[vertex] (G_2) at (-1,0) {2};
  \node[vertex_selected] (G_3) at (1,-2) {5}; \extralabel{-45}{$w_1$}{G_3};
  \node[vertex_v] (G_4) at (-1,-2) {4}; \extralabel{-135}{$v$}{G_4};
  \node[vertex] (G_5) at (-2.5,-1) {3};
  \node[vertex_v] (G_6) at (2.5,-1) {6};
  
  \draw [->,line width=0.2mm, black!30] (G_1) -- (G_2); 
  \draw [->,line width=0.2mm, black!30] (G_2) -- (G_5);
  \draw [->,line width=0.2mm, black!30] (G_5) -- (G_4);
  \draw [->,line width=0.2mm, black!30] (G_4) -- (G_3);
  \draw [->,line width=0.2mm, black!30] (G_3) -- (G_6);
  \draw [->,line width=0.2mm, black!30] (G_6) -- (G_1);

  \draw [->, line width=0.2mm, blue] [dashed] (G_3) to[out=135,in=45] (G_4);
  \draw [->, line width=0.2mm, blue] [dashed] (G_3) to[out=60,in=180] (G_6);
  \draw [->, line width=0.2mm, blue] [dashed] (G_3) -- (G_1);

\end{tikzpicture}
\end{figure}
\vspace{-1.5cm}
\begin{figure}[H]
\centering
\begin{tikzpicture}[every node/.append style={draw=gray, left color=white, single arrow}]
\node at (0,-1) [
    right color=black!50,
    single arrow,
    minimum height=1cm,
    shading angle=0,
    rotate=270
] {};
\end{tikzpicture}
\end{figure}
\vspace{-1.5cm}
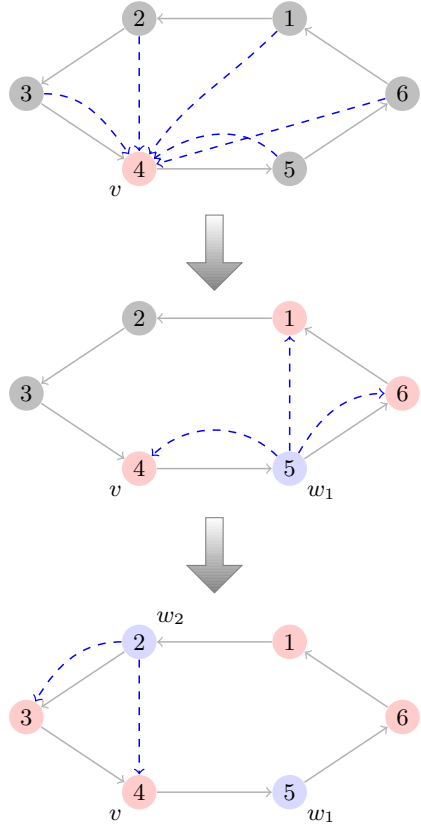
\begin{figure}[H]
\centering
\begin{tikzpicture}

  \tikzstyle{vertex}=[circle,fill=black!25,minimum size=12pt,inner sep=2pt]
  
  \tikzstyle{vertex_v}=[circle,fill=red!20,minimum size=12pt,inner sep=2pt]
  
  \tikzstyle{vertex_selected}=[circle,fill=blue!15,minimum size=12pt,inner sep=2pt]
  
  \tikzstyle{vertex_selected_2}=[circle,fill=yellow!15,minimum size=12pt,inner sep=2pt]
  
  \node[vertex_v] (G_1) at (1,0) {1};
  \node[vertex_selected] (G_2) at (-1,0) {2}; \extralabel{45}{$w_2$}{G_2};
  \node[vertex_selected] (G_3) at (1,-2) {5}; \extralabel{-45}{$w_1$}{G_3};
  \node[vertex_v] (G_4) at (-1,-2) {4}; \extralabel{-135}{$v$}{G_4};
  \node[vertex_v] (G_5) at (-2.5,-1) {3};
  \node[vertex_v] (G_6) at (2.5,-1) {6};
  
  \draw [->,line width=0.2mm, black!30] (G_1) -- (G_2); 
  \draw [->,line width=0.2mm, black!30] (G_2) -- (G_5);
  \draw [->,line width=0.2mm, black!30] (G_5) -- (G_4);
  \draw [->,line width=0.2mm, black!30] (G_4) -- (G_3);
  \draw [->,line width=0.2mm, black!30] (G_3) -- (G_6);
  \draw [->,line width=0.2mm, black!30] (G_6) -- (G_1);

  \draw [->, line width=0.2mm, blue] [dashed] (G_2) -- (G_4);
  \draw [->, line width=0.2mm, blue] [dashed] (G_2) to[out=180,in=60] (G_5);

\end{tikzpicture}

\caption{Exemplifying the assumption for contradiction that more than $c \cdot K$ neighbors chose $v$ in $C_6$ cycle graph (some edges were removed for clarity). From top to bottom: We first assume for contradiction that all vertices select $v = 4$ (marked in red), while $K=3$; then, $v.next= 5$ ($w_1$, marked in blue) forced to select $v$ and another $K-1$ vertices; then, $v.next = 2$ ($w_2$, marked in blue) finishes available selections.} 
\label{fig_proof}
\end{figure}
\vspace{-1.2cm}
\end{proof}

\section{Efficient Virtual Memory by Color-Classes}

The $K$-\textsc{Next-Modulo} algorithm can be considered as a way for division of neighbors to action-groups while keeping the local load balanced. In particular, each of the $K$ neighbors can be selected as backup node of the node performing the algorithm. While the original \textsc{Next-Modulo} algorithm of \cite{barenboim2020distributed} was proved as a valuable building block for backup placement, the $K$-\textsc{Next-Modulo} algorithm turns out to be even more useful for obtaining virtual memory in wireless networks. While in the original algorithm each vertex selects just one neighbor whose memory can be used, our algorithm can be used such that each vertex selects $K$ neighbors in order to use their memories. However, we would like to prevent a situation where multiple vertices use the memory of the same neighbor during the same round. In such a case, for a graph with bounded neighborhood independence $c$, instead of gaining an increase by a factor of $K$, we may gain as low factor as $K/(c \cdot K)$, which is not desirable. To avoid it, we should prevent the possibility that two vertices use the memory of the same neighbor at the same time. To this end, we define the graph $G' = (V,E')$, where $E'$ is the set of edges selected during the \textsc{$K$-Next-Modulo} algorithm. Since each vertex selects $K$ neighbors, and is selected by $O(c \cdot K)$ neighbors, the maximum degree of $G'$ is $\Delta' = O(c \cdot K + K) = O(c \cdot K)$. 

Next, we devise a new scheduling algorithm which first invokes the \textsc{$K$-Next-Modulo} for computing $K$-Backup Placement, and afterwards, the resulting subgraph $G'$ is colored by a 2-hop coloring in order to divide the vertices according to color classes, each of which is activated in a distinct round. Since the subgraph $G'$ is considered to be a general graph, and as we need to perform a 2-hop coloring, we invoke Linial algorithm for $O({\Delta} ^ 2)$-coloring \cite{linial1987distributive} \cite{linial1992locality} on $G'^2$ graph\footnote[3]{Given a graph $G=(V,E)$, $G^2=(V,\hat{E})$ where $\{u,v\}\in \hat{E}$ i.f.f. $\exists w \in V$ s.t. $\{u,w\},\{w,v\} \in E$.}
that produces $O({\Delta'} ^ 4)$ colors with round complexity of ${\log}^*{n} + O(1)$. Yet, as the subgraph $G'$ has maximal degree of $O(c \cdot K)$, while $c$ is a small constant and $K$ is a parameter that can be sufficiently small, the number of colors is relatively small, which will help us to keep good performances in the backup placement process. Moreover, we stress that $K$ is de-facto represents a selective tradeoff between memory and running time, since as $K$ is larger, more memory would be dedicated for backup placement in the expense of the running time, and vice versa. The algorithm pseudocode is provided in Algorithm \ref{algo-knm}. The correctness of the algorithm is given by Theorem \ref{theorem:k-backup}.

\begin{algorithm}[ht]
\caption{Efficient Virtual Memory by Color-Classes by \textsc{$K$-Next-Modulo}}
\label{algo-knm}
\begin{algorithmic}[1]
\Procedure{Efficient-VM(Graph $G = (V,E)$, $K$)}{}
\State {\bf foreach} active node $v \in G$ in parallel do:
\State \hspace{0.5cm} v.BP = \textsc{$K$-Next-Modulo}($v$, $\Gamma(v)$, $K$) \Comment{Active vertices select $K$ neighbors}
\State {\bf foreach} active node $v \in G$ in parallel do:
\State \hspace{0.5cm} \textbf{perform} 2-hop coloring with $O({\Delta'} ^ 2)$ colors \Comment{Divide active vertices to apply turns}
\State \textbf{foreach} \textit{cc} $\in$ \textit{Color-Classes} do: \Comment{Each vertex knows its Color-Class}
\State \hspace{0.5cm} {\bf foreach} node $v \in G'(cc)$ in parallel do: \Comment{Round-robin for Color-Classes vertices}
\State \hspace{1.0cm} \textbf{distribute} the backup placement in parallel from each $v \in G'(cc)$ $\rightarrow$ $v.BP$ vertices
\EndProcedure
\end{algorithmic}
\end{algorithm}

\begin{lemma}
\label{lemma:k-backup-mem}
The virtual memory of each active vertex in $V$ is increased by $K$.
\end{lemma}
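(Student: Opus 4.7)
The plan is to unpack the two guarantees produced by Algorithm~\ref{algo-knm} and combine them: (i) the \textsc{$K$-Next-Modulo} subroutine gives each active vertex $v$ a set $v.BP \subseteq \Gamma(v)$ of exactly $K$ neighbors whose memory it intends to use, and (ii) the $2$-hop coloring of $G'$ guarantees that within a single color class no two active vertices contend for the memory of the same neighbor. The lemma then reduces to checking that in the round when $v$'s color class is activated, $v$ can effectively access the full memory of each of its $K$ chosen neighbors, so $v$'s available memory grows by a $K$-multiplicative factor.

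First I would argue the size of $v.BP$. By construction, \textsc{$K$-Next-Modulo}$(v,\Gamma(v),K)$ returns the $K$ cyclic successors of $v$ in $\Gamma(v) \cup \{v\}$ ordered by ID. Since the standing assumption is $K \le \delta(G) \le \deg(v)$, these $K$ successors are distinct vertices of $\Gamma(v)$, so $|v.BP| = K$ and every element of $v.BP$ is an edge of $G'$ incident to $v$.

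Next I would verify non-interference among vertices in the same color class. Let $v$ and $u$ be two distinct active vertices assigned the same color by the 2-hop coloring of $G'$. By definition, their distance in $G'$ must exceed $2$, hence $v.BP \cap u.BP = \emptyset$: otherwise a common neighbor $w \in v.BP \cap u.BP$ would be adjacent to both in $G'$, putting $\mathrm{dist}_{G'}(u,v) \le 2$, contradicting the coloring. Therefore, when the algorithm activates this color class and all of its vertices simultaneously distribute data to their backup sets, no neighbor is written to by two senders, and each vertex $v$ obtains exclusive use of the memories of its $K$ chosen neighbors during its active round.

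Finally, I would conclude by arithmetic: since $v$ itself has one unit of memory and is now guaranteed uncontested use of $K$ additional units supplied by the $K$ neighbors in $v.BP$, the virtual memory of $v$ is increased by a factor of $K$, as claimed. The only subtle point is the second step, verifying the disjointness of backup sets from the 2-hop coloring guarantee; everything else follows directly from the definitions and from Theorem~\ref{theorem:k-next-modulo-correctness} applied to bound $\Delta'$.
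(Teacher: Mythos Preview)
Your proposal is correct and follows essentially the same route as the paper: both use the $2$-hop coloring of $G'$ to argue that two same-colored vertices cannot share a backup target, since a common target $w\in v.BP\cap u.BP$ would place $u$ and $v$ at distance at most $2$ in $G'$. The paper adds one explicit remark you leave implicit---that the targets $u_1,\dots,u_K$ themselves, being at distance $1$ from $v$ in $G'$, are also inactive in $v$'s round and hence their memory is genuinely free---but this follows immediately from your observation that same-colored vertices lie at distance exceeding $2$ in $G'$.
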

\begin{proof}
In each round only a single color class, out of $O(K^4)$, is activated. Since the color classes belong to a 2-hop coloring of $G'$, for each vertex $v$ in such a color class, all its neighbors in $G'$, and the neighbors of its neighbors in $G'$, are not activated. Consider the set of vertices $\{u_1,u_2,...,u_k\}$ selected by $v$. Any vertex $w$ in $G$ that also selected at least one of these vertices is at distance at most $2$ from $v$ in $G'$. Hence the color of $w$ is distinct from the color of $v$, and thus $w$ is not active during this round. The vertices $\{u_1,u_2,...,u_k\}$ are at distance $1$ from $v$, and thus are not active as well. The only active vertex that selected at least one of them is $v$. Hence, the memories of these $K$ vertices are assigned solely to $v$ in this time period.
\end{proof}

\begin{lemma}
\label{lemma:k-backup-rounds}
The algorithm terminates within $O({\log}^*{n} + K ^ 4)$ rounds.
\end{lemma}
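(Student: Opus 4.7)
The plan is to track the running time contribution of each phase of Algorithm \ref{algo-knm} and then sum them.

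First I would note that the initial phase, the invocation of \textsc{$K$-Next-Modulo}, is explicitly described as performing all its selections in parallel within a single round, so it contributes only $O(1)$ rounds. Next I would bound the maximum degree $\Delta'$ of the subgraph $G'$ induced by the selected edges: each vertex selects exactly $K$ neighbors, and by Theorem \ref{theorem:k-next-modulo-correctness} each vertex is selected by at most $c \cdot K$ neighbors, so $\Delta' = O(cK) = O(K)$ since $c$ is a constant.

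Then I would analyze the 2-hop coloring phase. A 2-hop coloring of $G'$ is a proper 1-hop coloring of $G'^2$, a graph whose maximum degree is at most $(\Delta')^2 = O(K^2)$. Applying Linial's algorithm \cite{linial1987distributive,linial1992locality} to $G'^2$ produces a proper coloring with $O((\Delta')^4) = O(K^4)$ colors in $O(\log^* n)$ rounds. This is the only phase that depends on $n$, and it contributes the $O(\log^* n)$ additive term.

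Finally I would account for the round-robin step that activates one color class at a time. Since the 2-hop coloring yields $O(K^4)$ color classes, and each is activated in a single dedicated round, the entire scheduling loop terminates in $O(K^4)$ rounds. Summing the three contributions $O(1) + O(\log^* n) + O(K^4)$ yields the claimed bound of $O(\log^* n + K^4)$. The only nontrivial obstacle is justifying that the degree of $G'$ is indeed $O(K)$, but this follows directly from Theorem \ref{theorem:k-next-modulo-correctness} together with the fact that each vertex performs exactly $K$ outgoing selections; the remaining calculations are routine applications of Linial's coloring bound.
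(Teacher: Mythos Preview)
Your proposal is correct and follows essentially the same approach as the paper's proof: both decompose the running time into the single round for \textsc{$K$-Next-Modulo}, the $O(\log^* n)$ rounds for Linial's coloring on $G'^2$, and the $O(K^4)$ rounds for the round-robin over color classes. Your version is slightly more explicit in deriving $\Delta' = O(K)$ from Theorem~\ref{theorem:k-next-modulo-correctness}, but the paper establishes this fact in the discussion preceding the algorithm rather than inside the proof itself.
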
{}

\begin{proof}
First, \textit{$K$-next-module} is performed in parallel within a single round. Afterwards, the construction of $O(\Delta ^ 2)$ 2-hop coloring of the subgraph $G'$ of $G$ requires running time of ${\log}^*{n} + O(1)$ by \cite{linial1987distributive} \cite{linial1992locality}. Finally, the round-robin fashion of the scheduling is done for $O(\Delta'^4(G'))$ color-classes, each color-class on a different round (where each color-class performs all of the work in parallel). Yet, as the subgraph $G'$ has maximal degree of $c \cdot K$, while $c$ is a small constant and $K$ is selective, the overall round complexity of the algorithm is $O({\log}^*{n} + K ^ 4)$.
\end{proof}

The combination of Lemma \ref{lemma:k-backup-mem} and Lemma \ref{lemma:k-backup-rounds} leads to the following theorem:
\begin{theorem}
\label{theorem:k-backup}
Given a graph $G=(V,E)$, using procedure \textsc{Efficient-VM}, each active node increases its virtual memory by factor $K$. This computation requires $O({\log}^*{n} + K ^ 4)$ rounds.
\end{theorem}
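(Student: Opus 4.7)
The plan is to derive the theorem as a direct consequence of Lemma \ref{lemma:k-backup-mem} and Lemma \ref{lemma:k-backup-rounds}, which together give the two required quantities (virtual memory factor and round complexity). Since the algorithm \textsc{Efficient-VM} is explicitly structured in three phases — invoking \textsc{$K$-Next-Modulo}, computing a 2-hop coloring on the resulting subgraph $G'$, and activating color classes in a round-robin manner — I would prove the theorem by verifying each claimed property in the phase where it is established, and then combining.

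First, I would argue the virtual-memory increase. The key is that in every active round only one color class of a 2-hop coloring of $G'$ is awake, so for any active vertex $v$, none of the $K$ vertices that $v$ selected (its neighbors in $G'$, at distance $1$) are active, and no other vertex that also selected one of them is active either (since such a vertex lies at distance at most $2$ from $v$ in $G'$ and therefore carries a different color). Hence the $K$ memories selected by $v$ are exclusively available to $v$ for that round, giving the factor $K$ increase; this is exactly the content of Lemma \ref{lemma:k-backup-mem}, which I would cite rather than re-prove.

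Second, I would account for the rounds. Phase one costs a single round because \textsc{$K$-Next-Modulo} is executed in parallel. Phase two invokes Linial's $O(\Delta^2)$-coloring on $G'^2$, which costs $\log^* n + O(1)$ rounds and produces an $O(\Delta'^4)$-coloring that is a valid 2-hop coloring of $G'$. Phase three iterates over the color classes, one per round. To collapse these contributions into $O(\log^* n + K^4)$, the decisive step is to bound $\Delta'$, and here I would invoke Theorem \ref{theorem:k-next-modulo-correctness}: each vertex is selected by at most $c \cdot K$ vertices and itself selects $K$, so $\Delta' = O(c \cdot K) = O(K)$ for constant neighborhood independence $c$. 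Consequently the number of color classes is $O(K^4)$, and summing the phases yields $O(\log^* n + K^4)$ rounds, matching Lemma \ref{lemma:k-backup-rounds}.

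The main obstacle is really located upstream, in the bound on $\Delta'$: without the guarantee from Theorem \ref{theorem:k-next-modulo-correctness} that the in-degree induced by \textsc{$K$-Next-Modulo} stays $O(c \cdot K)$, the 2-hop coloring step could blow up and destroy the $K^4$ bound. Given that guarantee, the rest of the argument is essentially bookkeeping: combine Lemma \ref{lemma:k-backup-mem} (memory factor $K$) with Lemma \ref{lemma:k-backup-rounds} (round count $O(\log^* n + K^4)$) to conclude the statement of Theorem \ref{theorem:k-backup}.
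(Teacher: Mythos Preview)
Your proposal is correct and matches the paper's approach exactly: the paper derives Theorem~\ref{theorem:k-backup} immediately from the combination of Lemma~\ref{lemma:k-backup-mem} and Lemma~\ref{lemma:k-backup-rounds}, and your write-up simply unpacks those two lemmas (including the crucial $\Delta' = O(cK)$ bound coming from Theorem~\ref{theorem:k-next-modulo-correctness}) before combining them.
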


\section{Extended Virtual Memory by Color Super-Classes}

In this section we devise an algorithm which supports \textit{extended} virtual memory, based on $K$-Backup Placement. In this algorithm several color classes can perform backups simultaneously during the same round. The procedure is applicable to graphs with bounded growth, in which one can color a 4-hop neighborhood with $O(\Delta)$ colors. (See, e.g., \cite{schneider2008log}.) In such graphs the neighborhood independence is bounded by a constant $c$. Moreover, here we assume a roughly uniform distribution of processors on the surface, such that the minimum degree is $\Omega(\Delta)$.

We begin with defining a phase of processing in the WSN. Suppose that just a single node $v$ in the network senses some data. The other nodes, $V \setminus v$, do not perform their own tasks, but assist $v$ with data processing. (By exchanging messages with $v$ and performing computations, but not sensing new data.) A phase consists of the rounds required to complete this task. The {\em rate of data change is denoted by the parameter $R$}. This parameter defines the number of phases during which data remains unchanged.

The procedure receives a graph $G = (V,E)$ as input and the parameter $R$, $1 \leq R < \Delta$, and proceeds as follows. We first color the network with $O(\Delta)$ colors. Then, we partition the graph using super-classes, each class consists of $O(\Delta/R)$ distinct colors, $R < \Delta$. At last, we perform $R$ phases, each with one active super-class (which is allowed to perform backup placement), in which we perform backup operation in parallel to the vertices of all the non-active super-classes (which are not allowed to perform backup placement). This completes the description of the algorithm. Its pseudocode is provided in Algorithm \ref{algo2}. The next lemmas summarizes its correctness. This algorithm is illustrated in Figure \ref{fig1}. 

\begin{algorithm}[H]
\caption{Extended Virtual Memory by Color Super-Classes}
\label{algo2}
\begin{algorithmic}[1]
\Procedure{Extended-VM(Graph $G = (V,E), \Delta, c, R$)}{}
\State \textbf{do} $(\Delta + 1)$-coloring of $G$
\State \textbf{divide} $G$ using the $(\Delta + 1)$-coloring to $\lceil \Delta + 1 / R\rceil$ $\rightarrow$ \textit{Super-Classes}
\State \textbf{foreach} \textit{sc} $\in$ \textit{Super-Classes} do: \Comment{Each vertex knows its Super-Class}
\State \hspace{0.5cm} {\bf foreach} node $v \in sc$ in parallel do: \Comment{Round-robin for Super-Classes vertices}
\State \hspace{1.0cm} $v.BP = \Gamma(v) \cap \left( \textit{Super-Classes} \setminus \textit{sc} \right) $
\State \hspace{1.0cm} \textbf{divide} the backup placement by $|v.BP|$
\State \hspace{1.0cm} \textbf{distribute} the divided backup placement in parallel from each $v \in G(sc)$ $\rightarrow$ $v.BP$ vertices
\EndProcedure
\end{algorithmic}
\end{algorithm}


\begin{figure}[H]
\centering
\begin{tikzpicture}

  \tikzstyle{vertex_r}=[circle,fill=red!25,minimum size=12pt,inner sep=2pt]
  \tikzstyle{vertex_g}=[circle,fill=green!25,minimum size=12pt,inner sep=2pt]
  \tikzstyle{vertex_b}=[circle,fill=blue!50,minimum size=12pt,inner sep=2pt]
  \tikzstyle{vertex_y}=[circle,fill=yellow!25,minimum size=12pt,inner sep=2pt]
  
  \node[vertex_b] (G_0) at (0,0)     {2};
  \node[vertex_g] (G_1) at (0.5,2)   {2};
  \node[vertex_g] (G_2) at (2,0.5)   {2};
  \node[vertex_y] (G_3) at (2,-0.5)  {1};
  \node[vertex_y] (G_4) at (0.5,-2)  {1};
  \node[vertex_g] (G_5) at (-0.5,-2) {2};
  \node[vertex_r] (G_6) at (-2,-0.5) {1};
  \node[vertex_y] (G_7) at (-2,0.5)  {1};
  \node[vertex_r] (G_8) at (-0.5,2)  {1};
  
  \draw [line width=0.2mm, black] (G_0) -- (G_1); 
  \draw [line width=0.2mm, black] (G_1) -- (G_8); 
  \draw [line width=0.2mm, black] (G_8) -- (G_0); 

  \draw [line width=0.2mm, black] (G_0) -- (G_3); 
  \draw [line width=0.2mm, black] (G_3) -- (G_2); 
  \draw [line width=0.2mm, black] (G_2) -- (G_0); 
  
  \draw [line width=0.2mm, black] (G_0) -- (G_5); 
  \draw [line width=0.2mm, black] (G_5) -- (G_4); 
  \draw [line width=0.2mm, black] (G_4) -- (G_0); 
  
  \draw [line width=0.2mm, black] (G_0) -- (G_6); 
  \draw [line width=0.2mm, black] (G_6) -- (G_7); 
  \draw [line width=0.2mm, black] (G_7) -- (G_0);

  \draw [->, line width=0.2mm, blue] [dashed] (G_8) to[out=225,in=135] (G_0);
  \draw [->, line width=0.2mm, blue] [dashed] (G_7) to[out=45,in=135] (G_0);
  \draw [->, line width=0.2mm, blue] [dashed] (G_6) to[out=315,in=225] (G_0);  
  \draw [->, line width=0.2mm, blue] [dashed] (G_4) to[out=45,in=315] (G_0);   
  \draw [->, line width=0.2mm, blue] [dashed] (G_3) to[out=225,in=315] (G_0);

\end{tikzpicture}
\caption{An extended virtual memory by color super-classes. A backup placement (\textit{blue}) in a croix pattée shaped graph with bounded neighborhood independence \textit{c = 4}, 4-coloring (red, yellow, green, blue) and 2 super-classes (1, 2), which divides the coloring into 1=\{red, yellow\} and 2=\{green, blue\}, at the 1-super-class turn to perform backup placement.}
\label{fig1}
\end{figure}
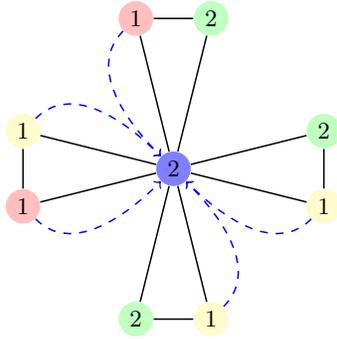

\begin{lemma}
Each vertex in $V$ is selected by at most $O(\Delta c / R)$ vertices.
\end{lemma}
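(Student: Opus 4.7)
The plan is to fix an arbitrary vertex $v \in V$ and bound, for any single phase of the algorithm, the number of neighbors of $v$ that could possibly select $v$ as a backup. Only vertices in the currently active super-class perform backup placement in a given phase, and they choose from neighbors in the non-active super-classes; so if $v$ itself lies in a non-active super-class, the set of candidates that might select $v$ is exactly $\Gamma(v) \cap S$, where $S$ denotes the active super-class. (If $v$ lies in the active super-class, no one selects $v$ in that phase, so this case is trivial.) The task reduces to upper-bounding $|\Gamma(v) \cap S|$.

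Next I would use the two structural facts built into the algorithm. First, by construction each super-class is the union of $\lceil (\Delta+1)/R \rceil = O(\Delta/R)$ color classes from the initial $(\Delta+1)$-coloring of $G$. Second, each color class of a proper coloring is an independent set in $G$, and therefore for any single color $i$, the set $\Gamma(v) \cap V_i$ of neighbors of $v$ having color $i$ is an independent set contained in the neighborhood of $v$. Since $G$ has bounded growth and in particular neighborhood independence at most $c$, we get $|\Gamma(v) \cap V_i| \leq c$ for every color $i$.

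Combining these two observations gives the bound by summing over the colors that make up the active super-class:
\[
|\Gamma(v) \cap S| \;=\; \sum_{i \in S} |\Gamma(v) \cap V_i| \;\leq\; c \cdot \left\lceil \frac{\Delta + 1}{R} \right\rceil \;=\; O\!\left(\frac{\Delta c}{R}\right).
\]
Since this is an upper bound on the number of vertices that can select $v$ in any given phase, the lemma follows.

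I do not expect any serious obstacle here: the argument is a clean two-step counting (independent set inside each color class, number of color classes per super-class), and both ingredients are already guaranteed by the hypotheses of the bounded-growth regime and by the partition rule in Algorithm \ref{algo2}. The only care point is being explicit that candidates for selecting $v$ live entirely in $\Gamma(v) \cap S$, so that no cross-super-class or non-neighbor contributions need to be accounted for.
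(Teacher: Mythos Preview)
Your proposal is correct and follows essentially the same argument as the paper: bound the number of neighbors of $v$ sharing any single color by the neighborhood independence $c$, then multiply by the $O(\Delta/R)$ colors in a super-class. Your write-up is simply more explicit than the paper's (which compresses the whole thing into two sentences), in particular in spelling out that only the active super-class can select $v$ and in writing the bound as a sum over color classes.
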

\begin{proof}
Since the neighborhood independence of $G$ is $c$, there are at most $c$ neighbors colored by the same color. In each super-class there are at most $O(\Delta / R)$ colors. Hence, for each vertex there are $O(\Delta c / R)$ vertices which can select it for backup.
\end{proof}

\begin{lemma}
Each vertex $v \in V$ can select at least $deg(v) - O(\Delta c / R)$ vertices which are not active.
\end{lemma}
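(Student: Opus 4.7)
The plan is to mirror the counting argument used in the preceding lemma, but applied to the neighbors that lie in the \emph{same} super-class as $v$ (rather than to vertices that select $v$). Specifically, I would let $v$ be an active vertex, i.e., a vertex belonging to the currently active super-class $sc$. A neighbor $u \in \Gamma(v)$ is non-active exactly when $u \notin sc$, so it suffices to upper bound $|\Gamma(v) \cap sc|$ and then subtract from $\deg(v)$.

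First I would recall two facts from the setup. The network was colored with an $(\Delta+1)$-coloring, so each color class is an independent set in $G$. The super-class $sc$ consists of $O(\Delta/R)$ distinct colors. Next I would apply the bounded-neighborhood-independence hypothesis to $\Gamma(v)$: since every color class is independent in $G$, any set of neighbors of $v$ sharing a single color is an independent set inside $\Gamma(v)$, and thus has size at most $c$. Summing over the $O(\Delta/R)$ colors comprising $sc$, we obtain
\[
|\Gamma(v) \cap sc| \;\le\; c \cdot O(\Delta/R) \;=\; O(\Delta c / R).
\]

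Finally I would conclude that the set of neighbors of $v$ available for backup placement in this phase is
\[
\Gamma(v) \setminus sc, \qquad |\Gamma(v) \setminus sc| \;\ge\; \deg(v) - O(\Delta c / R),
\]
which is exactly the claimed bound. I do not anticipate a real obstacle here: the argument is essentially the dual of the previous lemma's count, and the only subtlety is remembering that same-color vertices form an independent set, so the neighborhood-independence hypothesis caps the per-color contribution at $c$ rather than something depending on $\Delta$.
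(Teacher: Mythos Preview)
Your proposal is correct and follows essentially the same approach as the paper: bound the number of active neighbors by noting that the current super-class contains $O(\Delta/R)$ colors and that at most $c$ neighbors of $v$ can share any single color (since same-color vertices are independent), yielding $O(\Delta c/R)$ active neighbors and hence at least $\deg(v) - O(\Delta c/R)$ non-active ones. If anything, your write-up is slightly more explicit in justifying the per-color cap of $c$ via the independence of color classes.
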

\begin{proof}
First we analyze the number of active neighbors of $v$. Those are neighbors which belong to the same super-class. The number of colors in the super-class is $O(\Delta / R)$. As the neighborhood independence is $c$, each color class is taken by at most $c$ neighbors, therefore the super-class contains at most $O(\Delta c /R)$ neighbors. Consequently, the amount of non-active neighbors is the overall number of neighbors minus the active neighbors, which is $\deg(v) - O(\Delta c / R)$.
\end{proof}

\begin{lemma}
The virtual memory of each vertex in $V$ is increased by $\Theta(R)$.
\end{lemma}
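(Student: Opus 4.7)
The plan is to combine the two preceding lemmas of this section with the bounded-growth/uniform-distribution assumption that $\deg(v) = \Omega(\Delta)$ for every $v \in V$, and to read the ``virtual memory increase'' of $v$ as the total share of neighboring memories that $v$ acquires during its active phase, as allocated by the even-division step (lines 7--8 of Algorithm \ref{algo2}).

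For the lower bound, I would first bound the size of the backup target set. By the previous lemma, the set $v.BP$ of non-active neighbors has size at least $\deg(v) - O(\Delta c/R)$. Since $c$ is a constant, and we work in the regime where $\Delta c / R$ is at most a constant fraction of $\deg(v)$ (otherwise the claim is trivially vacuous for too-small $R$), the uniform-degree assumption gives $|v.BP| = \Omega(\Delta)$. Next, by the first lemma of this section, each non-active target $u \in v.BP$ is selected by at most $O(\Delta c / R) = O(\Delta / R)$ active vertices, so the share of $u$'s memory allocated to $v$ is at least $\Omega(R/\Delta)$. Summing over the $\Omega(\Delta)$ targets yields a virtual memory gain of $\Omega(\Delta) \cdot \Omega(R/\Delta) = \Omega(R)$.

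For the matching upper bound, I would use the uniform-distribution assumption in the opposite direction: because each super-class comprises a $1/R$ fraction of the $O(\Delta)$ colors and each color class holds at most $c$ neighbors of any $v$, each non-active backup target $u$ in fact has $\Omega(\Delta/R)$ active neighbors selecting it. Thus the share of $u$'s memory received by $v$ is at most $O(R/\Delta)$, and summing over $|v.BP| \leq \deg(v) = O(\Delta)$ targets caps the gain at $O(R)$. Combining the two bounds gives the $\Theta(R)$ increase claimed by the lemma.

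The main obstacle is the upper bound. The lower bound is essentially a direct counting argument that plugs the two preceding lemmas into the even-division step. The upper bound, by contrast, is not a consequence of the earlier lemmas --- it requires a \emph{lower} bound on the load $L(u)$ of each backup target, so that $u$'s memory is genuinely shared and $v$ cannot disproportionately hoard it. This lower bound on $L(u)$ is exactly where the bounded-growth hypothesis (together with the roughly uniform spread of colors across super-classes) must be brought in carefully, and is the step most at risk of needing an extra mild assumption on $R$ versus $c$ to go through cleanly.
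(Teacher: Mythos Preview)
Your proposal is correct and, for the $\Omega(R)$ direction, matches the paper's proof exactly: the paper multiplies the per-target share $\Omega(MR/(\Delta c))$ (from the first preceding lemma) by the number of non-active targets $\deg(v)-O(\Delta c/R)\approx\Delta(1-c/R)$ (from the second preceding lemma plus $\deg(v)=\Omega(\Delta)$) to obtain virtual memory $\approx M(R/c-1)=\Theta(MR)$.

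Where you differ is in rigor on the $O(R)$ side. The paper's proof does not argue the upper bound separately; it simply carries out the product above and asserts the $\Theta(R)$ conclusion. You correctly observe that the upper bound is \emph{not} a consequence of the two stated lemmas and requires a lower bound on the load of each target, which you supply via the bounded-growth/uniform-distribution hypothesis. So your route is the same as the paper's for the part the paper actually proves, and more careful for the part the paper leaves implicit.
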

\begin{proof}
Let $M$ be the memory size on each vertex. A vertex can be selected by at most $O(\Delta c / R)$ vertices, thus each vertex is allocating $O(MR / \Delta c)$ virtual memory for each backuping vertex. Each backuping vertex selects $\deg(v) - O(\Delta c/R)$ vertices, which equals to $\Delta \cdot \left( 1- c/R\right)$ since the minimum degree is $\Omega(\Delta)$. Hence, the virtual memory of each vertex in $V$ is $O(M(R/c - 1))$. Since the physical memory on each node is $M$, and since $c$ is a small constant, the virtual memory is increased by $\Theta(R)$.
\end{proof}

\begin{lemma}
The algorithm terminates within $O({\log}^*{n} + R)$ rounds.
\end{lemma}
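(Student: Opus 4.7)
The plan is to bound the running time of \textsc{Extended-VM} by summing the costs of its three structural stages: (i) the initial vertex coloring, (ii) the local partition into super-classes, and (iii) the round-robin over the super-classes. Since the total bound $O(\log^* n + R)$ has two additive parts, the natural strategy is to identify $\log^* n$ with the coloring cost and $R$ with the scheduling cost, and to show that everything else is $O(1)$.

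First, I would handle the coloring in line 2. The algorithm is applied to graphs of bounded growth (as stated at the beginning of the section), and for this class $(\Delta+1)$-coloring is known to be computable deterministically in $O(\log^* n)$ rounds by the algorithm of Schneider and Wattenhofer \cite{schneider2008log}, which is already used elsewhere in this section. I would quote this result directly and conclude that line 2 costs $O(\log^* n)$ rounds.

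Next, I would argue that line 3 (grouping colors into super-classes) incurs no communication at all: each vertex $v$ already knows its color $\varphi(v) \in \{1,\dots,\Delta+1\}$, and its super-class index is a deterministic function of $\varphi(v)$ and the globally known parameters $\Delta$ and $R$, namely $\lceil \varphi(v) \cdot R / (\Delta+1)\rceil$. Hence this step contributes $0$ rounds.

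Finally, for lines 4--8, I would observe that by the choice of partition there are exactly $R$ super-classes, since each super-class contains $O(\Delta/R)$ colors (this matches the count used in the preceding lemmas about the size of the active neighborhood). The outer \textbf{foreach} iterates over these $R$ super-classes sequentially. Within a single iteration, every active vertex $v$ computes $v.BP$ locally, partitions its payload locally, and then sends the resulting chunks to the vertices in $v.BP \subseteq \Gamma(v)$; all of this fits inside $O(1)$ communication rounds. Summing over the $R$ iterations gives $O(R)$ rounds for the scheduling stage, and combining the three stages yields the claimed $O(\log^* n + R)$ total. The only delicate point is justifying that a $(\Delta+1)$-coloring (rather than a higher-hop coloring) actually suffices here in $O(\log^* n)$ rounds; this is precisely where the bounded-growth hypothesis stated at the top of the section is used.
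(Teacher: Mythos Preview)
Your proposal is correct and follows essentially the same three-stage decomposition as the paper: $(\Delta+1)$-coloring via \cite{schneider2008log} in $O(\log^* n)$ rounds, local partition into super-classes in $O(1)$ rounds, and a round-robin over the $R$ super-classes in $O(R)$ rounds. Your argument is slightly more explicit (e.g., spelling out why the super-class assignment is a purely local function of the color), but the structure and the key citations match the paper's proof.
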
{}

\begin{proof}
First, the construction of $(\Delta + 1)$-coloring of $G$ requires $O({\log}^*{n})$ rounds \cite{schneider2008log}. Then, the division of the coloring to color-classes is done in $O(1)$ rounds using division of colors into super-classes. Finally, the round-robin fashion of the backup placement is done for $O(R)$ rounds, where each super-class performs all of the work in parallel. Therefore, the running time of the algorithm is $O({\log}^*{n} + R)$.
\end{proof}

As a corollary of the lemmas above, the following theorem is obtained.
\begin{theorem}
Given a constant $R$, and a uniformly distributed WSN forming a graph $G=(V,E)$ with bounded growth, and bounded neighborhood independence $c$, using \textsc{Extended-VM} each node increases its virtual memory by $\Theta(R)$. This computation requires $O({\log}^*{n} + R)$ rounds.
\end{theorem}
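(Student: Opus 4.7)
The plan is to derive this theorem as an immediate corollary by composing the four lemmas established just above: the selection-load bound of $O(\Delta c/R)$, the availability bound of $\deg(v) - O(\Delta c/R)$ non-active neighbors, the memory-increase bound of $\Theta(R)$, and the round-complexity bound of $O(\log^* n + R)$. Since the statement of the theorem combines two claims (memory gain and round complexity), the proof naturally splits into two short arguments, each of which is essentially a citation of an earlier lemma under the hypotheses of the theorem.

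First, I would verify that the hypotheses of the theorem (bounded growth, bounded neighborhood independence $c$, uniform distribution with $\delta(G)=\Omega(\Delta)$, and constant $R$) are exactly those needed by the preceding lemmas. The bounded-growth assumption is what permits the $(\Delta+1)$-coloring step to terminate in $O(\log^* n)$ rounds via Schneider--Wattenhofer, as invoked in the round-complexity lemma; the bounded neighborhood independence $c$ and uniform density $\Omega(\Delta)$ are precisely the hypotheses used in the memory-increase lemma to obtain the $\deg(v)-O(\Delta c/R) = \Theta(\Delta)$ estimate for the number of non-active neighbors selected by each vertex. Having checked this alignment, I would then apply the memory-increase lemma to conclude that each active node's virtual memory grows by $\Theta(R)$, and apply the round-complexity lemma to conclude that \textsc{Extended-VM} terminates in $O(\log^* n + R)$ rounds.

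I do not anticipate any real obstacle here, since all of the substantive work has already been done in the preceding lemmas: the combinatorial analysis of active versus non-active neighbors, the accounting for how memory fractions accumulate across the $O(\Delta c/R)$ potential backuppers, and the round-by-round breakdown of the algorithm (one $(\Delta+1)$-coloring phase of $O(\log^* n)$ rounds, a constant-time partitioning into super-classes, and $R$ round-robin phases). The only mild subtlety worth making explicit in the write-up is that the constancy of $c$ (together with $R$ being treated as a parameter, not a constant, in the asymptotic expressions) is what allows the bound $O(M(R/c-1))$ from the memory lemma to be rewritten cleanly as $\Theta(R)$; I would state this explicitly for clarity rather than bury it. The final line of the proof is therefore simply an ``As a corollary of the above lemmas, the theorem follows'' remark.
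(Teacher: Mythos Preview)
Your proposal is correct and matches the paper's own proof, which simply states the theorem as a corollary of the preceding lemmas. You are perhaps more explicit than the paper in checking that the theorem's hypotheses align with those of the lemmas, but the approach is identical.
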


\section*{Acknowledgments}
The authors are grateful to the anonymous reviewers of \textit{ALGOSENSORS} 2020 for their very helpful suggestions which helped us greatly improve this paper. The authors also would like to thank Harel Levin for his fruitful comments and extensive evaluation of this work.
This work was supported by the Lynn and William Frankel Center for Computer Science, the Open University of Israel's Research Fund, and ISF grant 724/15. 

\bibliographystyle{plain}
\bibliography{bibliography}
\end{document}